\newtcolorbox{mybox}[1][]{enhanced jigsaw,breakable,pad at break=1mm,
  oversize,left=8mm,interior hidden,colframe=black,nobeforeafter=,#1}
\theoremstyle{definition}
\newtheorem{theorem}{Theorem}[section]
\newtheorem{lemma}[theorem]{Lemma}
\newtheorem{definition}[theorem]{Definition}
\newcommand{\cmark}{\ding{51}}%
\newcommand{\xmark}{\ding{55}}%
\author{
  \textbf{Bruno Mazorra}\\
  Nokia Bell-labs\\
  Universitat Pompeu Fabra\\
  \texttt{brunomazorra@gmail.com}\\
  \and
\textbf{Michael Reynolds}\\
  University College  London \\
  \texttt{mireynolds@pm.me}
  \and
\textbf{Vanesa Daza}\\
  Universitat Pompeu Fabra \\
  \texttt{vanesa.daza@upf.edu}
} 
\title{Price of MEV: \\ Towards a Game Theoretical   Approach to MEV}
\begin{document}

\maketitle

\begin{abstract}
Maximal (also miner) extractable value, or MEV, usually refers to the value that \textit{privileged} players can extract by strategically ordering, censoring, and placing transactions in a blockchain. Each blockchain network, which we refer to as a domain, has its own consensus, ordering, and block-creation mechanisms, which gives rise to different optimal strategies to extract MEV. The strategic behaviour of rational players, known as searchers, lead to MEV games that have different impacts and externalities in each domain. Several ordering mechanisms, which determine the inclusion and position of transactions in a block, have been considered to construct alternative games to organise MEV extraction, and minimize negative externalities; examples include sealed bid auctions, first input first output, and private priority gas auctions. However, to date, no sufficiently formal and abstract definition of MEV games have been made.
In this paper, we take a step toward the formalization of MEV games and compare different ordering mechanisms and their externalities. In particular, we attempt to formalize games that arise from common knowledge MEV opportunities, such as arbitrage and sandwich attacks. In defining these games, we utilise a theoretical framework that provides groundwork for several important roles and concepts, such as the searcher, sequencer, domain, and bundle. We also introduce the price of MEV as the price of anarchy of MEV games, a measure that provides formal comparison between different ordering mechanisms. 

\end{abstract}

\textbf{Keywords}:Blockchain, MEV, Game Theory, Price of Anarchy

\maketitle
\section{Introduction}

The notion of miner-extractable value (MEV) introduced in \cite{daian2020flash}, and formally defined in \cite{babel2021clockwork}, measures the value that miners can extract by strategically ordering, censoring, and including transactions in a block. In general, block proposers (typically validators or miners), have the power to dictate or influence the inclusion and ordering of pending transactions in a new block. Thus, rational proposers have access to extra rewards per block. However as shown in \cite{daian2020flash}, this power is not limited to block proposers. Blockchain users and bots, usually termed \textit{searchers}, can use strategies like spamming transactions or outbidding competitors to extract MEV themselves. For this and other reasons, the term miner-extractable value has evolved to mean \textit{maximal extractable value}, formally defined in \cite{MaxEV}, which does not limit extraction to block proposers. We often refer to block proposers and searchers as players.

Depending on the network protocol, which we refer to as the domain, and type of MEV being extracted, rational searchers have incentives to take different actions to increase their expected revenue. For example, \cite{daian2020flash} showed that Ethereum, a network with higher propagation latency, incentivises searchers to widen their view of the mempool to outbid their competitors, while domains like Polygon PoS, where the propagation of transactions is probabilistic, incentivises searchers to spam transactions to extract MEV. This motivates a more general formulation of the MEV game presented in \cite{daian2020flash} such that models of rational searcher behaviour can be derived from a domain input.

\vspace{-0.3cm}
\subsection{Our contributions} 

The main goal of this paper is to formalise MEV games with a finite number of players under different ordering mechanism and auction designs which are defined by various environment conditions and rules such as latency of players, PoW/PoS consensus, and mempool visibility. To this end, the following topics will be covered: 
\begin{itemize}
    \item Inspired by \cite{babel2021clockwork}, we provide a formal definition of MEV and local MEV as an optimization problem constrained by player resources. In our definition, we emphasize relative constraints like capital, software resources, and view of the mempool.
    \item We formally define a more abstract and general MEV game that provides a foundation for analysing the incentives of searchers in different domains. 
    \item We also introduce different negative externalities that arise from MEV games, and we formally define the Price of Anarchy in MEV games under different cost functions as a way to quantify their impact. We explore the specific case of block space misuse, which we refer to as the price of MEV, and compute this to characterize and classify different ordering mechanisms.
    \end{itemize}
    
\subsection{Previous Work} 

In Flashboys 2.0 \cite{daian2020flash} Daian et al. proposed a formalization of the priority gas auction (PGA) model and proved mathematically and empirically the existence of Grim-Trigger Nash equilibrium under certain conditions. However, the results are limited to two players and the PoW Ethereum domain, and do not consider reverted transaction costs proportional to the bid or  that players can engage in Sybil attacks. In \cite{kulkarni2022towards}, the authors explore the equilibrium of sandwich MEV as a router game. Likewise, in \cite{weintraub2022flash}, the authors compare empirically the miner's revenue and the searchers competition of arbitrage opportunities before and after the introduction of Flashbots MEV auctions on Ethereum. In \cite{obadia2021unity}, the first formalization of cross chain MEV was proposed. However, to the best of our knowledge, there is no formal and empirical study on the impacts of MEV on several distinct domains.


\section{Theoretical Framework}\label{theoryFramework}

In general, miner or maximum extractive value (MEV) is a term that refers to any excess profits that a block proposer or searcher can make based on transaction ordering and/or transaction inclusion.
MEV was introduced in \cite{daian2020flash}, formally defined in \cite{babel2021clockwork}, and extended in cross-domain environments in \cite{obadia2021unity}. Another similar definition was given in \cite{MaxEV}, except MEV was treated independently of a player. The MEV opportunities we consider in this paper are limited to the profits a player can obtain by modifying the blockchain state. In this section, we will formalize this kind of MEV opportunity using the concept of profitable `bundles'. Then, we will define `local' MEV as the maximally profitable bundle a specific player can construct. Similar to \cite{obadia2021unity}, we start by formally defining the domain and searcher:
\begin{definition}
A \textit{domain} $\mathcal D$ is a self-contained system with a globally shared state $\texttt{st}$. This state is altered by
various agents through actions (sending transactions, constructing blocks, slashing, etc.), that execute within a shared execution
environment’s semantics. Each domain has a predefined consensus protocol that includes a set of valid algorithms to order transactions, denoted by $\texttt{prt}(\mathcal D)$. 
\end{definition}
A blockchain is a domain, however, there are other non-blockchain domains that also have MEV, like centralized exchanges.
\begin{definition} A \textit{searcher} (in general, we will call it a player) in a domain $\mathcal D$  is a participant that assumes that sequencers follow a specific set of rules and take strategic actions (send bundles with specific bids) to maximize their own utility. In general, we will assume that a player's utility depends linearly on their token balances.
\end{definition}

In a domain $\mathcal D$ with state $\texttt{st}$, the update of the state $\texttt{st}$ after executing transactions $\texttt{tx}$ is given by $\texttt{st}\circ \texttt{tx}$. For an ordered set of transactions $B=\{\texttt{tx}_1,...,\texttt{tx}_l\}$, we have the composition $\texttt{st}\circ B = \texttt{st}\circ \texttt{tx}_1\circ....\circ \texttt{tx}_l$.

Similar to \cite{babel2021clockwork}, we use $\texttt{Addr}$ to denote the set of all possible accounts and $\textbf{T}$ to denote the set of all tokens. We define $b:A\times \textbf{T}\rightarrow \mathbb Z$ as the function that maps a pair of, an account and a token, to its current balance. More precisely, for $a\in \texttt{Addr}$ we let $b(a,\cdot)$ denote the balance of all tokens held in $a$ and $b(a,T)$ denote the account balance of token $T$. 
Abusing notation, we will denote by $b(a)$ the value of $b(a,\cdot)$ priced by a numéraire $E$. That is, if there is a pricing vector $p=(p_{T\rightarrow E})_{T\in\textbf{T}}$, then $b(a)= p\cdot b(a,\cdot)=\sum_{T\in\textbf{T}}p_{T\rightarrow E}b(a,T)$.
\begin{definition} An \textit{ordering mechanism} is a set of rules that determines the order and inclusion of a set of transactions in a block. More formally, let $\mathcal T$ be the set of all transactions, an ordering mechanism is a map $\textbf{or}:\mathcal P^{\leq}(\mathcal T)\rightarrow \mathcal P^{\leq}(\mathcal T)$, where $\mathcal P^{\leq}(\mathcal T)$ is the set of all ordered subsets of $\mathcal T$, such that $\textbf{or}(T)\subseteq T$ for all $T\in \mathcal P^{\leq}(\mathcal T)$.
\end{definition}
\begin{definition} A \textit{sequencer} is an agent of a domain responsible for maintaining the liveness and consistency through a set of actions. We distinguish four types of sequencers: \emph{dummy}, \emph{dummy Byzantine}, \emph{rational} and \emph{partially rational}. A sequencer is \textit{dummy} if he  follows the validator consensus protocol $\texttt{prt}(\mathcal D)$. A sequencer is \textit{ dummy Byzantine} if they misbehave, but other nodes can detect his misconduct. A sequencer is \textit{rational} if it follows a set of valid actions on the domain $\mathcal D$ to maximize its revenue (including deviating from an ordering mechanism). Therefore, if a player misbehaves to maximize their payoff but can not be identified, punished, or slashed, we say that is a rational player. A sequencer is \textit{partially rational} if they commit to using a specific valid ordering mechanism to maximize its payoff. 
\end{definition}

On Ethereum, miners are usually partially rational. In general, miners run \texttt{mev-geth}, which receives a block transaction ordering of highest revenue from the Flashbots relayer. Rational sequencers could take the Flashbots block and reorg to maximize their own revenue. Nevertheless, miners do not deviate from the \texttt{mev-geth} intended protocol\footnote{However, recent miners are deviating from the Flashbots protocol to extract more value. No formal or academic work has proved this yet. These are statements from some Flashbots team members.}. For this reason, players participating in the MEV extraction are not necessarily sequencers; they take a sequence of actions to bias nodes to maximize their own revenue.
The set of actions that a sequencer can follow depends on the domain and can include arbitrary actions. Studying the impact that rational players and sequencers can have in a domain is helpful to bound the set of actions. For example, in selfish mining \cite{sapirshtein2016optimal} the set of actions to construct or publish the private chains. 

From now on, we will assume that sequencers are partially rational. An example of an ordering mechanism is the default \texttt{geth} client, which uses a greedy approximation algorithm to optimize the blocks' transaction fee revenue.

A sequencer receives a set of concurrent transactions $\texttt{tx}_1,...,\texttt{tx}_n$ with gas price $m_1,...,m_n$ and $g_1,...,g_n$ units of gas. If the sequencer includes $\texttt{tx}_i$, it obtains $m_ig_i$ in fees. Since the gas used per block is restricted in every domain by some constant $L$, the sequencer must choose a subset of transactions $\mathcal T$ such that $\sum_{i: \texttt{tx}_i\in\mathcal T} g_{i}\leq L.$ Then, a node that tries to maximize its revenues per block needs to solve the following Knapsack optimization problem, which we name Knapsack Extractable Value (KEV) problem:

\vspace{-0.5cm}
\begin{maxi*}|s|
         {}{\sum_{i=1}^n x_im_ig_i}
         {}{}
         \addConstraint{\sum_{i=1}^nx_ig_i\leq L}{}
         \addConstraint{x_i\in\{0,1\}}{}.
\end{maxi*}
We note by KEV$(\mathcal T)$ the solutions' revenue of the optimization problem. Knapsack optimization problems \cite{knapsack} are NP-complete; that is, no known polynomial time algorithm finds an optimal solution. Thus, each sequencer usually chooses different algorithms to approximate the optimal solution. For example, Parity nodes order transactions by gas price $m$ without considering the gas costs. 

Another example is the \hyperlink{https://docs.flashbots.net/flashbots-auction/releases/alpha-v0.5}{Flashbots relayer}, which uses a greedy approximation algorithm \cite{akccay2007greedy}, ordering transactions by the ratio of miner payment and gas consumed (a natural extension of ordering by gas price taking into account direct payments and more than one transaction). However, the problem that the Flashbots relayer tries to solve is quite different, since it does not include reverted transactions or competing bundles. Moreover, we proved (see \ref{appendix:counter}) that there are examples where this algorithm does not produce a good approximation. In \cite{angeris2021note}, the authors reformulate the block production as a linear programming problem without taking into account competing bundles. We will give more details after defining bundles.

Now that we have settled up some ground definitions, we will focus next on the formalization of the extraction of MEV opportunities. To simplify the games, we will not consider the complete MEV extraction per block, but separate the MEV opportunities into ``independent/concurrent'' ones.

\begin{definition} Let $\mathcal P$ be the set of all  players. A set of transactions $\mathcal T=\{\texttt{tx}_1,...,\texttt{tx}_k\}$ and a state $\texttt{st}$ induce an MEV opportunity in a domain $\mathcal D$ to a player $P\in\mathcal P$ if they can construct an ordered set of transactions $B$ such that:
\begin{equation}
    \Delta b(P;\texttt{st}\circ B,\texttt{st}) := b(P,\texttt{st}\circ B)-b(P,\texttt{st}) > 0,
\end{equation}
where $b$ is the balance of $P$ with the corresponding order. We call $B$ a \textit{profitable bundle} or \textit{bundle}. If $B$ consists of a unique transaction, we say that $B$ is an \textit{MEV-transaction}. Each bundle can contain extra metadata, such as sequencer timestamps, bundle hash, sender ID, and gas price. For a given state $\texttt{st}$, each bundle incurs execution costs called gas costs. From the bundle metadata and the domain state, the bundle execution incurs some payments, denoted by $\textbf{pr}(B)$, to a sequencer or set of sequencers responsible for executing the bundle.
\end{definition}
\begin{definition} A set of bundles $B_1,...,B_n$ are \textit{order-invariant valid} if for every permutation $\sigma\in S_n$ we have that the state transition
\begin{equation}
    \texttt{st} \longrightarrow \texttt{st} \circ B_{\sigma(1)}\circ...\circ B_{\sigma(n)}
\end{equation}
is a valid state transition and is invariant among all the permutations. A set of bundles $B_1,...,B_n$ \textit{compete} in a state $s$ if for all $i$ and $j$, $\texttt{st}\rightarrow \texttt{st}\circ B_i\circ B_j$ is not a valid state transition.
\end{definition}
\begin{definition}  We say that a bundle $B$ is a partial extraction of a bundle $B'$ if $B$ and $B'$ compete, and $\Delta b(P,\texttt{st}\circ B)<\Delta b(P',\texttt{st}\circ B')$.
\end{definition}

The Flashbots combinatorial auction (FBCA) allows players to bid for bundles. The Flashbots allocation rule tries to solve the block optimization problem with conflicting constraints. That is, the block can not contain competing bundles (bundles that contain same transactions or bundles that revert). 
So, the FBCA can be modelled as the knapsack problem with a conflict graph $G=(V,E)$ \cite{pferschy2009knapsack}, where $V$ is the set of bundles, and $uv\in E$ if and only if bundles $u$ and $v$ compete. Flashbots use a greedy approximation algorithm, leading to an auction mechanism similar to a first-price sealed auction, since players can only observe winning bundles.  That is, the bundles are ordered by effective gas price (or average gas price, see more details in the appendix \ref{appendix:order}) and afterwards prune the conflicting bundles. In case of symmetric gas efficiencies, the MEV opportunity is sealed to the higher bidder and pays what they bid.
In general, this algorithm does not give the optimal solution (see appendix \ref{appendix:counter}). It also allows searchers to check for relayer deviation with just the executed block and the bundle sent by the searcher. In other words, theoretically, searchers can privately monitor the correct functioning of the relayer.

Now we are ready to define the local MEV for a player $P$ or  $\text{MEV}_{P}$ for short. The definition we provide is similar to the one provided in \cite{babel2021clockwork}. However, in \cite{babel2021clockwork}, players have constraints on the state transitions, but not on the set of bundles that they can construct.
\begin{definition} Let $\mathcal D$ be a domain with state $\texttt{st}$, a player $P$ with local mempool view $\mathcal T^M_P$ and a set of transactions $\mathcal T_P$ that the player $P$ can construct. We denote by $\mathcal C_P=\mathcal T^M_P\cup \mathcal T_P$ to be the set of reachable transactions. We define the \textit{local} MEV \textit{of $P$ with state $\texttt{st}$ } ($\text{MEV}_P(\texttt{st})$) as the solution to the following optimization problem
\begin{maxi*}|s|
{B}{\Delta b(P;\texttt{st}\circ  B,\texttt{st})}
{}{}
\addConstraint{ B\subseteq \mathcal C_P}{}
\addConstraint{\texttt{st}\rightarrow\texttt{st}\circ  B\text{ is a valid state transition in $\mathcal D$}}{}
\end{maxi*}
Let $\text{argmev}_P(\texttt{st})$ be the set of bundles that are a solution to the optimization problem\footnote{Observe that this definition fundamentally depends on the token balance. In  the presence of a unique token, $\text{MEV}_P$ is trivially defined. Nevertheless, in the presence of multiple domains and tokens this definition is non-trivial. Moreover, in this definition, we are assuming that all players value equally the tokens, assuming the existence of some transferable utility. We leave a more general definition of local MEV for future work.}. The constraints of reachable bundles is subject to a player's information, gas efficiency, budget, ability to propose blocks, etc.

Observe that if all players have access to the same MEV opportunities and have access to all bundles, then  this definition is equivalent to the one provided in \cite{babel2021clockwork}. If the mempool view is the empty set, we will refer to the MEV as \textit{on top of block} MEV, and we will denote it by $\text{TMEV}_P$.
\end{definition}
\begin{lemma} For a given player $P$ and a state $\texttt{st}$, if \\$B\in \text{argmev}_P(\texttt{st})$, then $\text{MEV}_P(\texttt{st}\circ B)=0$.
\end{lemma}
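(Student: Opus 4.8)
The plan is to prove the two inequalities $\text{MEV}_P(\texttt{st}\circ B)\ge 0$ and $\text{MEV}_P(\texttt{st}\circ B)\le 0$ separately, arguing the second by contradiction against the optimality of $B$. The non-negativity is immediate: the empty bundle $\varnothing$ is always feasible for the optimization defining $\text{MEV}_P$ at any state (it lies in $\mathcal C_P$ and induces the identity, hence valid, state transition), and it yields $\Delta b(P;s\circ\varnothing,s)=0$; therefore $\text{MEV}_P(s)\ge 0$ for every state $s$, and in particular for $s=\texttt{st}\circ B$.

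For the reverse inequality, I would suppose toward a contradiction that $\text{MEV}_P(\texttt{st}\circ B)>0$. Then there is a bundle $B'$, feasible for the problem posed at $\texttt{st}\circ B$, with $\Delta b\big(P;(\texttt{st}\circ B)\circ B',\,\texttt{st}\circ B\big)>0$. Let $\tilde B$ be the ordered concatenation of $B$ followed by $B'$, so that $\texttt{st}\circ\tilde B=\texttt{st}\circ B\circ B'$. The key observation is the telescoping identity
\begin{equation}
\Delta b(P;\texttt{st}\circ\tilde B,\texttt{st})=\Delta b(P;\texttt{st}\circ B\circ B',\texttt{st}\circ B)+\Delta b(P;\texttt{st}\circ B,\texttt{st}),
\end{equation}
which follows directly from cancelling $b(P,\texttt{st}\circ B)$ between the two balance differences. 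Since $B\in\text{argmev}_P(\texttt{st})$ the second summand equals $\text{MEV}_P(\texttt{st})$, while the first is strictly positive by assumption; hence $\Delta b(P;\texttt{st}\circ\tilde B,\texttt{st})>\text{MEV}_P(\texttt{st})$. Provided $\tilde B$ is itself feasible for the problem at $\texttt{st}$, this strictly beats the optimum, contradicting $B\in\text{argmev}_P(\texttt{st})$. We then conclude $\text{MEV}_P(\texttt{st}\circ B)\le 0$, and combined with non-negativity, $\text{MEV}_P(\texttt{st}\circ B)=0$.

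The step that will require the most care is verifying that the concatenation $\tilde B$ is admissible for $\text{MEV}_P(\texttt{st})$, i.e.\ that $\tilde B\subseteq\mathcal C_P$ and that $\texttt{st}\to\texttt{st}\circ\tilde B$ is a valid state transition. Validity of the composed transition should follow from the semantics of sequential execution: $\texttt{st}\to\texttt{st}\circ B$ is valid because $B$ is feasible, $\texttt{st}\circ B\to\texttt{st}\circ B\circ B'$ is valid because $B'$ is feasible at $\texttt{st}\circ B$, and $\circ$ composes these into a single valid update. The reachability condition $\tilde B\subseteq\mathcal C_P$ is the genuinely delicate point: it requires that the transactions of $B'$, drawn from the reachable set at $\texttt{st}\circ B$, remain reachable at $\texttt{st}$. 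This is exactly the assumption that $\mathcal C_P=\mathcal T^M_P\cup\mathcal T_P$ is determined by the player's fixed resources (mempool view and constructible transactions) rather than by the intermediate state, so that feasibility is closed under concatenation. If one instead allows $\mathcal C_P$ to change after applying $B$, the statement would need an additional monotonicity hypothesis; I would flag this dependence explicitly and carry out the argument under the state-independent reading of $\mathcal C_P$.
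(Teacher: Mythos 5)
Your proposal takes essentially the same route as the paper's proof: assume $\text{MEV}_P(\texttt{st}\circ B)>0$, take a witnessing bundle $B'$, concatenate it with $B$, and contradict the optimality of $B\in\text{argmev}_P(\texttt{st})$. In fact your write-up is more careful than the paper's one-line argument, which simply forms $B''=B\cup B'$ and asserts the contradiction without checking feasibility of the concatenated bundle or establishing the non-negativity half ($\text{MEV}_P(\texttt{st}\circ B)\geq 0$ via the empty bundle), both of which you handle explicitly.
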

\begin{proof} Assume otherwise. Let $\texttt{st}$ and $B\in\text{argmev}_P(\texttt{st})$ such that $\text{MEV}_P(\texttt{st})>0$. Then, there exist $B'$ such that $\Delta b(P,\texttt{st}\circ B\circ B',\texttt{st}\circ B)>0$. Taking the bundle $B''=B\cup B'$, we have that $\Delta(P,\texttt{st}\circ B'',\texttt{st})>\text{MEV}_P(\texttt{st})$, that is a contradiction.
\end{proof}
\begin{definition} For a set of reachable bundles $\mathcal C$, we define the \textit{ $\mathcal C$-permissionless} MEV ( $\text{MEV}^\mathcal C$) as the minimum local MEV that can be extracted among players that have access to the bundles in $\mathcal C$. More formally,
\begin{equation*}
     \text{MEV}^{\mathcal C}(\texttt{st}):=\min_{P\text{ s.t }\mathcal C\subseteq \mathcal T_P}\text{MEV}_P(\texttt{st}).
\end{equation*}
If for all players $P$ with reachable bundles $\mathcal C$, $\text{MEV}_P(\texttt{st})=0$, we say that $\texttt{st}$ is a null MEV state. We denote the set of null MEV states as NS.
\end{definition}
\textbf{Remark}: Note that this definition is an extension of the definition made in \cite{MaxEV} when $\mathcal C_K$ is the set of transactions that burns $K$ coins. In other words, $\text{MEV}^{\mathcal C_K}$ permissionless is MEV that can be extracted by players with at least $K$ coins. In this paper, we will assume that a finite number of players are able to capture a specific MEV opportunity. In other words, we will fix a set of players $\mathcal P = \{P_1,...,P_n\}$. For each $P\in\mathcal P$, we define $\texttt{st}_i$ as the state that realizes the extraction $\text{MEV}_P$. Then, we will assume that for each player $P\in\mathcal P$,  $\text{MEV}_P(\texttt{st}_j)=0$ for all $P\in\mathcal P$. 
\section{The MEV stage game} \label{section:abstract_game}
In this section, we will consider an abstraction of the PGA model proposed in  \cite{daian2020flash}. We will model and formalize the MEV stage game. In other words, we will formalize the game played for extracting a given MEV opportunity in a specific block. Afterwards, we will define the utility of the players as a function of their balance and the notion of strategy. Finally, we introduce a solution concept of the MEV stage game, the Sybil resistant Nash equilibrium.

We model the MEV game as a sequential game among a set of $n$ searchers $\mathcal P=\{P_1,...,P_n\}$ who can send bundles to obtain an MEV opportunity. We will assume that all players compete for the same MEV opportunity and have sufficient capital to extract it. When a specific player wins the MEV opportunity, it reaches a null MEV state for all players. In the following, we will provide a list of points that will define the MEV stage game. This game will take into account the latency of the players, the duration of the blocks, the mechanism of transaction inclusion, and the costs of improving software, node location, etc.
\begin{enumerate}
    \item  \textbf{Continuous time}: Searchers act in continuous time rather than discrete rounds (as in typical extensive-form games). That is, at any moment in time, players can take an action that, in our case study, will be sending a bundle.
    \item \textbf{Local MEV}: At time $t$, each player $P_i$ finds an MEV opportunity of value $v_i(t)\sim V_i(t)$, (in general, such that $v_i(t)\geq v_i(s)$) for all $t\geq s$, with $\{V_i(t):t\geq0\}$ being a  family of distributions. More specifically, for each time $t$, the player finds a bundle $B$ such that $\Delta b(\texttt{st}\circ B,\texttt{st}) = v_i(t)$.
    \item \textbf{Latency}: Searchers can see each other's actions, but not immediately, due to the latency in the peer-to-peer network. The latency is modeled by a directed weighted graph $G=(E,V)$. Each searcher controls a set of nodes $N_i\subseteq V$. So, if $ P_j$ sends a bundle from a subset of peers $L_j\subseteq N_j$ at time $t_j$, $ P_i$ observe the bundle at time $t=t_j+d(L_j,N_i)$, where $d$ is the non-symmetric distance induced by the weight.
    \item  \textbf{Probabilistic auction duration}: The auction terminates at a randomly drawn time when a new block is mined. We model the block interval as a positive random variable $\mathcal B$. 
    \item \textbf{Competitors information}: Players do not necessarily know the number of competitors and their features. However, each player estimates the number of competitors and their behavior.
    \item \textbf{Access to a public correlating device}: Let $(\Omega,\Pr)$ be a probability space. All searchers observe the first drawn $w\in [0,1]$ of a uniform public random variable $X$ (a beacon or the hash of the previous block). This can be used by players to coordinate their actions.
    \item \textbf{Auction Mechanism}: Sequencers have a predefined algorithm that inputs a set of bundles and outputs an order of transactions for inclusion in a block. We will denote this algorithm by the \textit{ordering mechanism}. This ordering mechanism and the characteristics of the MEV opportunity (Back-running, Front-running, sandwich,...) determines the revenue for each player. 
    The ordering mechanism and the MEV opportunity determinate an auction $\mathbb A=(\textbf{x},\textbf{pr})$, that is, a pair of maps that take as inputs the set of bundles and a random event and outputs a winner and the payment induced per each player. More formally,
    \begin{align*}
        \textbf{x}:\texttt{View}_s\times\Omega&\rightarrow \mathcal\{x\in\{0,1\}^n:x\cdot\textbf{1}^T\leq 1\}\\
    \textbf{pr}:\texttt{View}_s\times\Omega&\rightarrow \mathbb R^n.
    \end{align*}
    where $\texttt{View}_s$ is the set of transactions seen by the sequencer when constructing the block. 
    \item \textbf{External costs}: Players can improve their mempool view, reduce their latency, and improve their software to increase their local MEV. The set of external actions that a player $P_i$ can take rely on a set of actions $\text{A}_i$, and the costs of taking those actions are modeled by a function $c_i:\text{A}_i\rightarrow \mathbb R$.
\end{enumerate}
\begin{definition} Given a set of players $\mathcal P$, a random positive variable $\mathcal B$, a gossip network graph $G$ and an auction mechanism, $\mathbb A=(\textbf{x},\textbf{pr})$, an MEV game is $\mathbb G = (\mathcal P,G, \mathcal B,\mathbb A,(c_i)_{i=1,..,n})$. We say the game is symmetric if all players share the same features.  That is, $V_i=V_j$, $c_i=c_j$ for all $i,j$, $\mathbb A$ are symmetric functions and $G$ is a homogeneous graph.
\end{definition}
The MEV game has a structure of sequential continuous game, and therefore we can define utilities, strategies, and solution concepts such as Nash equilibrium.
\begin{definition} A strategy $S_i$ is a procedure for participating in the MEV stage game and may be probabilistic. $S_i$ takes the following form, for a current time $t$ and a local view $\texttt{view}_i(t)$ of the player $i$:
$    (a,\texttt{view}_i'(t))\leftarrow S_i(t,\texttt{view}_i(t)).$
The output $a$ is the action taken by $P_i$ and is bounded by domain constraints.
The output $\texttt{view}_i'(t)$ is the updated state, that is, $\texttt{view}_i'(t)=\texttt{view}_i(t)\cup\{a\}$. A strategy of a player $i$ is non-adaptive if it does not depend on the local view $\texttt{view}_i(t)$. More formally, for every $t$ and every pair $\texttt{view}_i(t),\texttt{view}_i'(t)\in \texttt{View}_i(t)$, it holds $S_i(t,\texttt{view}_i(t))=S_i(t,\texttt{view}_i'(t))$. 
\end{definition}

\begin{definition}
Let $S = (S_1,...,S_n)$ be a strategy tuple, then the \textit{expected payoff of the player $P_i$} is 
\begin{equation*}
    u_i(S_i,S_{-i}) := \mathbb E [\Delta b_i\mid S],
\end{equation*} 
where $S_{-i}$ is an $n-1$ tuple without the $i$th coordinate. We denote by $\mathbb S_i$ the set of all strategies.
\end{definition}
 
That is, $u_i(S_i,S_{-i})$ is the expected payoff of player $i$ if they execute the strategy $S_i$, the other players execute the strategy $S_{-i}$ in a domain $\mathcal D$.   Notice that we assume players have a monotone risk-neutral utility over the balances. 

\begin{definition} Let $\mathbb G$ be an MEV stage game, we say that a tuple of strategies $(S_1,...,S_n)$ is a \textit{Nash equilibrium} (NE) if for each player $P_i$
\begin{equation*}
    u_i(S_i,S_{-i}) \geq u_i(\tilde{S_i},S_{-i}),\text{ for all strategies $\tilde{S_i}$}.
\end{equation*}
In other words, if players are taking the strategies $(S_1,...,S_n)$, none of them have incentives to deviate unilaterally. We denote NE($\mathbb G$) the set of all Nash equilibrium.
\end{definition}

However, the notion of Nash equilibrium is not strong in permissionless environments. For example, the equilibrium of firms competing in the Cournot price model \cite{ruffin1971cournot} is weak against Sybil attacks. Another example is the equilibrium constructed in the theorem of Flashboys 2.0 \cite{daian2020flash} that proves that exponential raise bidding strategies with grim-trigger area a Nash equilibrium in the game with two players. Nevertheless, one can prove that agents have incentives to use Sybil attacks to maximize their payoff. For this reason, we introduce a stronger notion of a solution concept for MEV games that are resistant to Sybil attacks:
\begin{definition}
Let $\mathbb G$ be a symmetric\footnote{This definition can be extended to non-symmetric games, but we will leave it for future work.} MEV game with $n$ players, and let $\phi :\mathbb N\rightarrow \bigoplus_{i=1}^\infty \mathbb S_i$ be a strategy mapping. A Sybil resistant Nash equilibrium is $\phi$ such that for all $n\in\mathbb N$, $\phi(n)\in \bigoplus_{i=1}^n \mathbb S_i$, and $\phi(n)$ is a Nash equilibrium, where for each $i\in \{1,\ldots,n\}$ and $j\in \{1,\ldots,n+1\}-\{i\}$,
    \begin{align*}
        u_i(\phi(n)_i,\phi(n)_{-i})\geq\; &u_i(\phi(n+1)_i,\phi(n+1)_{-i})\\
        &+u_j(\phi(n+1)_j,\phi(n+1)_{-j}).
    \end{align*}
\end{definition}
The notion of Sybil resistance is important in permissionless pseudo-anonymous environments such as blockchains. Players have the ability to generate additional addresses to take more profits from cooperative strategies. In future work, we will prove that Sybil resistant cooperative Nash equilibrium exist in the priority gas auction in the non-repeated games, and the existence of Sybil resistant equilibrium in games with private mempools.

\section{Price of MEV}\label{section:pricemev}
MEV games have an important impact on users, network congestion, computation overload, and blockchain liveness. On one hand, some MEV opportunities arise from value extracted from users. On the other hand, in general, MEV games induce an inefficient extraction of MEV opportunities, leading to network congestion (e.g. P2P network load), and chain congestion (e.g. block-space usage). In this paper, we will not take into account the negative externalities of MEV on individual users (for example, sandwich attacks and oracle manipulation), but rather the negative externalities that impact the consensus protocol, liveness, chain quality, stake distribution,  etc. Suppose there is an MEV opportunity on a state $\texttt{st}$, with a set of $n$ players that compete to extract it, sending bundles $B_1,...,B_k$. Then, a sequencer, using the ordering mechanism $\textbf{or}$, outputs a block (this order mechanism does not necessarily respect the internal order of the bundles). In this setting, we define the \textit{block space cost} as the gas cost of executing the block built by the ordering mechanism using the bundles, more formally 
\begin{equation}
   C(B_1,...,B_k;\texttt{st})=\text{gasUsed}\left(\texttt{st}\circ \textbf{or}(\cup_{i=1}^k B_i)\right). 
\end{equation}
We naturally extend the function of cost over strategies by taking the outcome (or the expected outcomes in case of mixed strategies) of the strategies (the broadcasted bundles).
Clearly, ex-post it is trivial to compute the gas cost. However, ex-ante, estimating the gas costs induced by the MEV game is much more complex due to the non-commutative nature of execution costs. Also, other cost functions can be very relevant in some domains. Different MEV games can induce other types of negative externalities, such as wasted resources induced by computation costs\footnote{For example, the energetic costs induced by computing TH/s in blockchains that order transactions by nonce. See \hyperlink{https://github.com/bnb-chain/bsc/pull/915}{BSC-PR} for more details.} or centralization effects.

In general, self-interested behaviour by strategic players leads to an inefficient result, an outcome that could be improved upon given centralized control over everyone's action \cite{roughgarden2015intrinsic}. Nevertheless, imposing such control can be costly, infeasible or undesirable (due to trust assumptions). This motivates the search for conditions and mechanisms in which decentralized optimization by strategic agents is guaranteed to produce a near-optimal outcome. The price of anarchy (PoA) \cite{roughgarden2005selfish} is a measure that quantifies how far is the worst Nash equilibrium (in the sense of social cost) with respect to
any optimal configuration that minimizes the social cost. More formally, given a cost function $C$ and the set of Nash equilibrium $NE$, the price of anarchy is defined as:
\begin{equation*}
    \text{PoA} = \frac{\max_{S\in NE}C(S)}{\min_{S}C(S)}.
\end{equation*}
Different examples of the study of the price of anarchy can be seen in \cite{christodoulou2005price,roughgarden2015intrinsic,roughgarden2005selfish}. However, the price of anarchy in the MEV game is, in general, not well-defined. For example, assume that two players are competing for extracting the same arbitrage opportunity. Then, as we will see, the block space cost of extracting the arbitrage opportunity will be the sum of gas used by executing both searchers' transactions. However, the minimal cost of the game is zero, since not extracting the MEV opportunity is a feasible outcome. Therefore, the ratio is not defined, leaving an inconsistent definition of the price of anarchy in the MEV game. In the following, we propose a small adjustment to have a well-defined price of anarchy in the MEV game, which we denote as the Price of MEV. This measure tracks the social costs induced by the competition among individually rational agents for MEV extraction in a particular MEV game. More precisely, the Price of MEV is a family of measures parametrized by the social cost functions. This family of measurements can be useful to compare the negative externalities and trade-offs of different MEV games. 
Similar to the price of anarchy definition, the price of MEV of game $\mathbb G$ with social cost $C$ is the ratio of the worst Nash equilibrium with respect to the extraction made by the most efficient player in an order-free consensus protocol blockchain.   

\begin{definition} Given a cost function $C$, the set of Sybil resistant Nash equilibrium SNE($\mathbb G$), and the set of actions that induce a null MEV state NS, we define the price of MEV as:
\begin{equation*}
    \text{PoMEV}(\mathbb G,n) = \frac{\max_{S\in SNE(\mathbb G)}C(S)}{\text{min}_{S\in \text{NS}}C(S)},
\end{equation*}
where $\text{min}_{a\in \text{NS}}C(a)$ is the minimum cost taken by extracting the MEV opportunity. 
\end{definition}

We argue that a more efficient mechanism to extract MEV opportunities (low Price of MEV) can have an important impact on blockchain stability, users utility, and consensus protocol\footnote{Note that we are assuming that the extractable value exists and will be extracted.}. In the case where the cost function is defined over pure strategies, we can extend the definition naturally over the mixed strategies, taking the expectancy of the outcomes. 

\section{Conclusions and Future Work}
In this work, we proposed measures to formally study the Nash equilibrium and negative externalities of different ordering mechanisms. We think that this is import due to the future changes on Ethereum mainnet about Proposer-Builder separation and MEV-boost \cite{mevboost2022}. We leave for future work to study and compute the price of MEV of popular proposed ordering mechanisms.
Also, in the future we will study the negative externalities of zero-sum MEV opportunities such as sandwich attacks, time bandit attacks, eclipse attacks, draining bot attacks\footnote{\hyperlink{https://github.com/Defi-Cartel/salmonella}{https://github.com/Defi-Cartel/salmonella}}, and censorship attacks. We conjecture that a domain with suboptimal block-space market design will lead to more block-space misuse (high price of MEV), raising the transactions fees of the underlying domain. The higher transactions fees will increase the direct payments of the miner (another form of MEV) and the market inefficiencies. This market inefficient will imply inefficient price discover, creating more internal and cross chain MEV opportunities. We call this effect the Circular forces of MEV \ref{fig:circular_mev}.
\begin{figure}[!h]
    \centering
    \includegraphics[scale=0.35]{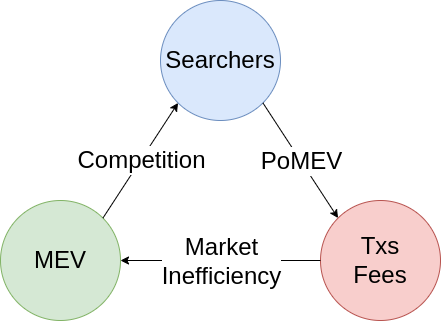}
    \caption{Circular forces of MEV}
    \label{fig:circular_mev}
\end{figure}

\printbibliography
\appendix
\section{Examples of ordering mechanisms}\label{appendix:order}
  For the sake of completeness, we will provide a list of examples of ordering mechanisms. They are induced in different domain and consensus protocol designs.
\begin{itemize}
    \item \textbf{Priority gas ordering mechanism }: Sequencers try to solve the KEV by using the greedy approximation algorithm that consists of ordering the transactions by gas price. In this case, if a player is trying to capture an MEV opportunity, it must monitor the mempool and choose an optimal gas price $m$. If a player is trying to front-run a transaction $\texttt{tx}$, with gas price $m$, it is enough to outbid it with $m+\eta$. In this setting, if the gas cost of exploiting this MEV opportunity is $g$, the block-space price of MEV of the uni-agent game is $1$.
    \item \textbf{Flashbots mechanism}: Searchers send bundles to the relayer through a private channel. The Flashbots relayer tries to build the block with the highest profits among all the blocks that can be constructed using the transactions in the public mempool and the Flashbots mempool of bundles. But the bundles have a number of allocation constraints that the Flashbots relayer must account for \cite{MaxEV}. In order to build the block with the highest profit, Flashbots (to our knowledge) uses a greedy approximation algorithm. As described in the Flashbots documentation, a bundle $B$ is ordered by effective gas price / bundle score, which is defined as,
    \begin{equation*}
        sc(B):= \frac{\Delta_{coinbase}+\sum_{tx\in B\setminus \mathcal T\mathcal X} g(tx)m(tx)}{\sum_{tx\in B} g(tx)}
    \end{equation*}
    where $\Delta_{coinbase}$ denotes the direct payment to the miner, $\mathcal T\mathcal X$ is the set of mempool transactions, $g(tx)$ is the gas used by $tx$ and $m(tx)$ is the gas price of $tx$.
    
    \item \textbf{Random ordering mechanism}:  The transactions included in the next block and the order of transaction execution are probabilistic with a uniform distribution. 
    \item \textbf{First input first output mechanism}: The transactions are ordered by the sequencer local's timestamps or by a pseudo-global timestamp such as the one mentioned in \cite{kelkar2020order}. In this sense, players with better geolocation and propagation algorithms will win the MEV game. However, in decentralized systems, this will depend on the leader geolocation that will change randomly per round. 
    \item \textbf{Dictatorship/Permissioned mechanism}: The sequencer has its own arbitrary ordering rule, prioritizing transactions of a fixed set of addresses. In this setting, players do not have a lot of freedom to interact or win the MEV opportunity. In other words, the sequencer will censor other players' transactions to prioritize its own extraction. Moreover, this potentially will induce inefficient market prices. However, the block-space price of anarchy is minimized since just one player is extracting it. This rule also models the situation where the miner captures the MEV opportunity, prioritizing its own profitable bundles.
    \item \textbf{Metadata mechanism}: Let $(\{0,1\}^n,\leq)$ be a total ordered set. Transactions and bundles can add a parameter \texttt{nonce}, giving them an associated hash identification. Then the bundles and transactions are ordered by hashes. For example, if a transaction $\texttt{tx}$ with $\texttt{nonce}$ tries to extract a back-running arbitrage opportunity, then a player will try to produce a transaction that extracts the opportunity nonce $\texttt{nonce'}<\texttt{nonce}$.\\
\end{itemize}
\begin{table}
\centering
\begin{tabular}{|l|c|c|c|c|c|}
\hline
&\multicolumn{1}{l|}{PGA} & \multicolumn{1}{l|}{FSS} & \multicolumn{1}{l|}{R.O.}                 & \multicolumn{1}{l|}{Perms.} & \multicolumn{1}{l|}{MP} \\ \hline
Ethereum (Geth)        & \cmark    & \xmark/$\sim$    & \xmark                     & \cmark               & \xmark                \\ \hline
Polygon         & \xmark    & \xmark    & \cmark                     & \cmark               & \xmark                \\ \hline
BSC             & \cmark    & \xmark    & \xmark                     & \cmark               & \xmark                \\ \hline
Avalanche       & \xmark    & \cmark    & \xmark                     & \xmark               & \cmark                \\ \hline
Arbitrum        & \xmark                        & \cmark    & \xmark                     & \cmark               & \cmark                \\ \hline
Shutter Network & \cmark                        & \xmark                      & \xmark                                        & \cmark                              & \cmark                \\ \hline
Solana          & \xmark    & \cmark    & \xmark                       & \cmark               & \xmark                \\ \hline
Flashbots (alpha-v0.6)      & \cmark    & \xmark    & \xmark/ & \cmark               & \cmark                \\ \hline
\end{tabular}
\caption{MEV games features different chains. Perms= Permissionless and MP = Mempool Privacy.}
\end{table}
\section{Bot example}
Before the pull request \cite{prfifo}, transactions with the same gas price were randomly ordered, creating incentives for searchers to spam transactions to capture an MEV opportunity. In the following, we will show how a particular MEV bot that captured back running opportunities, was responsible for consuming unnecessary  block space to increase its expected revenue. A lot of examples can be seen using tools such as \verb|mev-inspect-py|. Moreover, we could lower bound  the estimated block space price of MEV by $\approx 7$.
\begin{figure}[!h]
    \centering
    \includegraphics[scale=0.3]{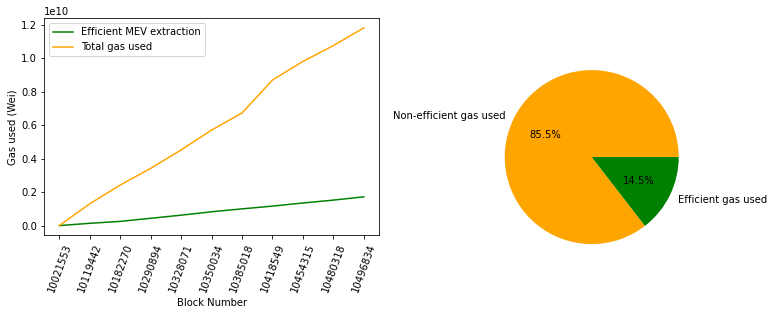}
    \caption{Total gas used and efficient gas used by bot MEV Bot 0x00...E9a}
\end{figure}

\section{Flashbots counter-example}\label{appendix:counter}

Let FBR($B_1,...,B_n$) be the revenue using the Flashbots greedy approximation algorithm. Let OTP($B_1,...,B_n)$ be the maximal revenue of the Flashbots combinatorial problem. 

\textbf{Claim}: The Flashbots combinatorial auction is not optimal. More specifically,
\begin{equation*}
    \text{inf}\{\frac{\text{FBR}(B_1,...,B_k)}{\text{OPT}(B_1,...,B_k)}:\text{ for }B_1,...,B_k \text{ bundles}\}\leq \frac{1}{\lfloor\frac{L}{g_{min}}\rfloor-1},
\end{equation*}
where $g_{min}$ is the minimal gas consumed by competing bundles and  $L$ is the gas limit of a block.

\textbf{Proof}: Let $B_1,...,B_k$ all the bundles such that, $B_1$ compete with $B_i$ for all $i\not=1$ and $B_i,B_j$ are pairwise non-competing bundles.  
Moreover, assume that $B_1$ has gas costs $L/k$ and effective gas bid $m+\varepsilon$ for $\varepsilon>0$ and all the other bundles have gas bid $m$. 
Then, the Flashbots algorithm outputs $B=\{B_1\}$, leaving to a sequencers' revenue of $m+\varepsilon$. On the other hand, the optimal valid block is $B=\{B_2,...,B_k\}$ with $m(k-1)$ revenue. The result follows using bundles with gas cost $g_{min}$.

\end{document}